\tikzset{
  treenode/.style = {shape=rectangle, rounded corners,
                     draw, align=center,
                     top color=white, bottom color=blue!20},
  root/.style     = {treenode, font=\Large, bottom color=red!30},
  env/.style      = {treenode, font=\ttfamily\normalsize},
  dummy/.style    = {circle,draw}
}
\newcommand*{\hml}{\textsc{HML}\xspace}
\newcommand*{\muhml}{$\mu$\textsc{HML}\xspace}
\newcommand*{\shml}{\textsc{sHML}\xspace}
\newcommand*{\chml}{\textsc{cHML}\xspace}
\newcommand*{\hyperhml}{Hyper-$\mu$\textsc{HML}\xspace}
\newcommand*{\ellihml}{\textsc{Hyper}$^1$-\textsc{sHML}\xspace}
\newcommand*{\ltl}{LTL\xspace}
\newcommand*{\act}{\ensuremath{\textsc{Act}}\xspace}
\newcommand{\instr}{\triangleleft}
\newcommand{\false}{\mathtt{ff}}
\newcommand{\SOSrule}[2]{\frac{\displaystyle #1}{\displaystyle #2}}
\newcommand*{\nat}{\ensuremath{\mathbb{N}}}
\newcommand{\monc}{\text{C}\textsc{mon}\xspace}
\author{Luca Aceto \inst{1,3} \and Antonis Achilleos\inst{1} \and Elli Anastasiadi\inst{1}\and Adrian Francalanza\inst{2}}
\institute{ICE-TCS, Department of Computer Science, Reykjavik University, Iceland  \and 
Department of Computer Science, University of Malta, Msida, Malta 
\and Gran Sasso Science Institute, L'Aquila, Italy \\
\email{\href{mailto:luca@ru.is}{luca@ru.is}}, \email{\href{mailto:luca.aceto@gssi.it}{luca.aceto@gssi.it}},
\email{\href{mailto:antonios@ru.is}{antonios@ru.is}},
\email{\href{mailto:elli19@ru.is}{elli19@ru.is}},
\email{\href{mailto:adrian.francalanza@um.edu.mt}{adrian.francalanza@um.edu.mt}},
}
\authorrunning{Aceto et al.}
\titlerunning{Monitoring hyperproperties with circuits}
\title{Monitoring Hyperproperties with Circuits\thanks{The authors were supported by the projects `Open Problems in the Equational Logic of Processes’ (OPEL) (grant No 196050-051) and `Mode(l)s of Verification and Monitorability' (MoVeMent) (grant No~217987) of the Icelandic Research Fund, and `Runtime and Equational Verification of Concurrent Programs' (ReVoCoP) (grant No 222021), of the Reykjavik University Research Fund. Luca Aceto's work was also partially supported by the Italian MIUR  PRIN 2017 project FTXR7S IT MATTERS `Methods and Tools for Trustworthy Smart Systems'.}}
\begin{document}
\maketitle
\begin{abstract}
This paper presents an extension of the safety fragment of Hennessy-Milner Logic with recursion over sets of traces, in the spirit of Hyper-LTL. It then introduces a novel monitoring setup that employs circuit-like structures to combine verdicts from regular monitors. 
The main contribution of this study is the definition of the monitors and their semantics, as well as a monitor-synthesis procedure from formulae in the logic that yields ‘circuit-like monitors’ that are sound and violation complete over a finite set of infinite traces.
\end{abstract}

\section{Introduction}
The field of runtime verification concerns itself with providing methods for checking whether a system satisfies its intended specification at runtime. This runtime analysis is done through a computing device called a \textit{monitor} that observes the current run of a system in the form of a trace \cite{Bartocci2018,francalanza_found_Run_Moni}. Runtime verification has recently been extended to the setting of concurrent systems
 \cite{AcetoPOPL19,Bocchi_Tuosto_DistInteraction,Cassar2017ReliabilityAF,choreogr_monitors_as_mem}
  with several attempts to specify properties over sets of traces,  and to introduce novel monitoring setups \cite{Agrawal2016RuntimeVO,complex_monitor_hyper,monitor_hyper}. A centerpiece in this line of work has been the specification logic Hyper-\ltl \cite{hyperproperties}. Intuitively Hyper-\ltl allows for existential and universal quantification over a set of traces (which describes the set of observed system runs). The properties over one trace are stated in \ltl, with free trace variables, and then made dependent on properties of other traces via the quantification that binds the trace variables. 

We define the linear-time specification logic \hyperhml, as a counterpart to Hyper-\ltl, building on
previous studies of monitorability and monitor synthesis for \muhml \cite{AcetoPOPL19,hml_monitors}, which are necessary for the kind of correctness and complexity guarantees we aim to achieve in this work. 
However, just like Hyper-\ltl, \hyperhml can define dependencies over different traces, which intuitively causes extra delays in the processing of traces as the properties observed on one of them can impact what is expected for another. For example, if a property requires that an event of a trace is compared against an event occurring in all other traces then the processing cost of this event becomes dependent on the number of traces.
 In this approach, we keep the processing-at-runtime cost (as defined in \cite{rabin_real_time_comp}) minimal by restricting the type of properties verified to a natural fragment of \hyperhml, but applying no assumptions on the system under scrutiny. 
  This comes in contrast with the existing research, where the runtime verification of such properties is dealt with via a plethora of modifications and assumptions made over the monitoring setup, such as being able to restart an execution or having access to all executions of a system. 

Our monitor setup is engineered for the studied fragment of the specification language, by utilizing circuit-like structures to combine verdicts over different traces. The fragment of the logic restricts the amount of quantification that can be applied to the properties of individual traces and thus limits the dependencies between them. This naturally induces circuits with monitors from \cite{AcetoPOPL19} as input nodes and simple kinds of gates at the higher levels, with the resulting structure having constant depth with respect to the corresponding formula, which is considered efficient in the field of parallel computation \cite{complexity_small_depth_circuits}. Thus, each step taken by such a monitor in response to an event of the system under scrutiny takes constant time, which makes the monitors `real time' in the sense of \cite{rabin_real_time_comp}.

\section{The logic}
Our logic is defined in the style of Hyper-\ltl as presented in \cite{hyperproperties}. The quantification among traces remains the same, but the language in which local trace properties are stated is \muhml.
%
We consider the following restriction to a multi-trace \shml logic (the \textit{safety} fragment of \muhml \cite{AcetoPOPL19}), with no alternating quantifiers, called \ellihml.  We can similarly define the \chml (co-safety) fragment, and the \hml fragment. 

\begin{definition}
Formulae in $\textsc{Hyper}^1\text{-}\textsc{sHML}\xspace$ are  constructed by the following grammar:
	\begin{align*}
	\varphi \in \textsc{Hyper}^1\text{-}\textsc{sHML}\xspace& ::= \exists_{\pi} \psi &\mid&~ \forall_{\pi} \psi &\mid&~ \varphi \sqcup \varphi &\mid&~ \varphi \sqcap \varphi  
\end{align*}
where
$\psi$ stands for a formula in \shml and $\pi$ is a trace variable from an infinite suppy of trace variables $\mathcal{V}$. 
 $\sqcup \text{ and }\sqcap$ stand for the regular $\vee$ and $\wedge$ boolean connectives, only usable at the top syntax level. 
  Although the syntactic distinction is cosmetic, it allows us to keep the synthesis function in Definition \ref{def:synthesis} clearer.
\end{definition}
\textit{Semantics} The semantics of \hyperhml is given over a finite set of infinite traces $T$ over \act and it is a natural extension of the linear-time semantics of \muhml.  The existential and universal quantification happens via the trace variable $\pi$ which ranges over the traces in $T$. The extension of the \muhml linear-time semantics from \cite{AcetoPOPL19} to the \hyperhml semantics is done in the style of Hyper-\ltl. 
This semantics applies to \ellihml, which is a fragment of \hyperhml.
We only consider \textit{closed} formulae in \ellihml and for these we use the standard notation $T \models \varphi$ to mean that a set of traces $T$ satisfies $\varphi$ (and similarly for $ T \not \models \varphi$).
\begin{example}\label{ex:1} The \ellihml formula  $ \forall_{\pi} [a]\false \sqcap \exists_{\pi} [b] ( max~x. ([a]\false ~\wedge ~[b] x))$,  over the set of actions $\{a,b\}$, states that for any set of traces $T$, none of the traces in $T$ start with $a$, and $b^\omega \in T$.
\end{example} 
\section{The monitors}

The intuition behind our monitor design is the following (we recommend following this intuition along with the example given in Figure \ref{fig:circuit_monitor}).
Over a finite set of traces $T$ we instrument a circuit-like structure.  
Each trace $t \in T$ is assigned a fixed set of regular monitors that correspond to the properties in \shml to be verified.
 These regular monitors are connected with simple gates which evaluate to $yes$, $no$ or $end$ based on the verdicts produced by their associated regular monitors. Once some of these gates start evaluating to verdicts, they communicate with more complex gates, connected in a circuit-like graph, which propagate input verdicts though logic operations until the root node of the circuit reaches a verdict as well. The formal definition of a circuit monitor is given in the style of computational complexity circuits \cite[Definition~1.10]{circuit_complexity}.
\begin{definition}
The language $\text{C}\textsc{mon}_k$ of $k$-ary  monitors, for $k >0$, is given through the following grammar: 
\begin{align*}
&M \in \text{C}\textsc{mon}_k
::= \bigvee [m]_k &&|~~ \bigwedge [m]_k &&| ~~ M \vee M &&|~~ M \wedge M
\\
		& m ::=~~~ yes ~| ~no ~|~end &&|~~ a.m,~ a\in \act &&|~~ m +n &&|~~ rec ~x.m &&|~~ x
\end{align*}
\monc is the collection of infinite \textit{sequences} $(M_i)_{i \in \nat}$ of terms that are generated by substituting $k = i, \forall i \in \nat$, in a term $M
$ in $\text{C}\textsc{mon}_k$. 
\end{definition}


 We use  $M,M' \ldots$ to denote the monitors (infinite sequences of terms generated by the first line of this grammar), and refer to them as circuit monitors, and $m_1, m_2 \ldots$ to denote the regular monitors described by the second line. 
  The notation $[m]_k$ corresponds to the parallel dispatch of $k$ identical regular monitors $m$, where $k = |T|$, with $T =\{t_1, \ldots ,t_k \}$. 

  Given a monitor $M \in \monc$, for some $k >1$, we will call each syntactic sub-monitor of $M$ a  gate. For example, we have inductively that over the monitor $M' \vee M''$ we have the gates $M' \vee M''$ and all gates contained in monitors $M'$, and $M''$, while for the monitor $\bigvee [m]_k$ we have the gates $\bigvee [m]_k$ and gates $m_{[i]}$ for $i \in \{ 1, \ldots, k \}$.
   For $M \in \monc$ we define a set of \textit{program variables} $G_M$, where one variable $g_{M'}$ is assigned to each gate $M'$ of $M$.

   For readability purposes we will be omitting the naming $g$ of the program variables and call them by the name of the gate they represent.
     We use
  $m_{[i]}$ to mean the regular monitor $m$ instrumented over the trace $t_i$. 
  It is important here to see that $g_{m_{[i]}}$ will be the \textit{name} of the gate assigned to one such monitor and stays unchanged while the actual monitor advances its computation as trace events are read.
   This will be clarified later, through the instrumentation rules.  

A program variable related to gate $M$, can be assigned the following values: $yes$, $no$, $end$, and $j$, with $j \in \{ 0, \dots, 2^{(\ell+1)}-1 \}$, $\ell$ being the number of immediate syntactical sub-monitors of gate $M$.
Number $j$ is encoded in binary, and is used to carry the information of which sub-gates have given some verdict (this means that the encoding of $j$ has $\ell +1$ bits).
The value of the $\ell +1$-th bit of $j$ is reserved to encode that one of the sub gates has outputted an $end$. The information that $j$ carries is very important for the evaluation of a gate, as often this evaluation depends on the verdicts of more that one sub-gate, as well as what these verdicts are (see Figure \ref{fig:circuit_monitor}). A variable $g_m$ can only take the values $yes$, $no$ and $end$, produced by the relevant monitor instrumented over a trace.

  A \textbf{\textit{configuration}} of monitor $M \in \monc$, for some $k>1$, is an array $s_M$ containing a value for all program variables $g$ of $M$. We denote the set of all configurations for a monitor $M$ as $\mathcal{S}_M$. We use the notation $s[M\backslash i]$ to denote the update of a configuration $s$ where gate $M$ stores some value $j$ to one where the $i$-th coordinate of $j$ is $0$, while all other variables have the value they had in configuration $s$. Similarly, we use the notation $s[M\backslash end_i]$ to refer to a configuration where the update $s[M\backslash i]$ has taken place \textit{and} the value of the $\ell +1$-th bit of $j$ is set to $1$. We also use the notation $s[\sfrac{v}{M}]$ with $v \in \{yes,no,end\}$,  to mean a configuration where the value of the variable for gate $M$ is updated to $v$.
  
   All gate variables in a circuit monitor are initialized to $2^{\ell}-1$ (a sequence of $\ell$-many zeros), to represent that all sub-gates are waiting to give some output and $s_{M_{init}}$ stands for the initial configuration of $M$. Since $M$ is a family of circuits, we have that the initial configuration of each monitor $M_i$ in the family corresponds to a different initial configuration $s_{M_{i-init}}$.
 \begin{example}
Figure \ref{fig:circuit_monitor}, is an example of a circuit monitor and its evaluation.
\end{example}
\begin{figure}
\begin{minipage}{0.5\textwidth}
\centering
\begin{tikzpicture}[scale=0.6,->, node distance=1.6cm,auto, main node/.style={circle,draw,font=\tiny\bfseries}]
\node (m) [main node]{$\displaystyle\bigwedge_{\{11\}}$};
\node (s) [left of=m] {$s_{M_{3-init}}$};
\node (tr) [right=1.8cm of m] {$\longrightarrow^*$};
\node (c) [main node,below left of=m] {$\displaystyle\bigwedge_{\{111\}}$};
\node (d) [main node,below right of=m] {$\displaystyle\bigvee_{\{111\}}$};
\node (m1) [below left of=c] {$m_{1,1}$};
\node (m2) [right=0.cm of m1] {$m_{2,1}$};
\node (m3) [below right of=c] {$m_{1,2}$};
\node (m4) [right=0.cm of m3] {$m_{2,2}$};
\node (m5) [below right=0.5cm and 0.cm of d] {$m_{1,3}$};
\node (m6) [right=0.cm of m5] {$m_{2,3}$};
\node (t1) [below of=m1] {$a^{\omega}$};
\node (t2) [below left = 1.1cm and 0.cm of m4] {$b.a.b^{\omega}$};
\node (t3) [below of=m6]{$b^{\omega}$};
\path 
	(c) edge (m)
	
    (d) edge (m)
	(m1) edge [dashed] (c)
	(m2) edge [dashed] (d)
	(m3) edge [dashed] (c)
	(m4) edge [dashed] (d)
	(m5) edge [dashed] (c)
	(m6) edge [dashed] (d)
 	(t2) edge [blue,dashed] (m3)
         edge [red,dashed] (m4)
 	(t3) edge [blue,dashed] (m5)
         edge [red,dashed] (m6)       
 	(t1) edge [blue,dashed] (m1)
         edge [red,dashed] (m2);
\end{tikzpicture}
\end{minipage}
\begin{minipage}{0.5\textwidth}
\centering
\begin{tikzpicture}[scale=0.6,->, node distance=1.6cm, auto, main node/.style={circle,draw,font=\tiny\bfseries}]
\node (m) [main node]{\small $\underset{\{01\}}{\displaystyle no}$};
\node (s) [left of=m] {$s_{M}$};
\node (c) [main node,below left of=m] {\small $\underset{\{100\}}{no}$};
\node (d) [main node,below right of=m] {$\displaystyle\bigvee_{\{011\}}$};
\node (m1) [below left of=c] {$no$};
\node (m2) [right=0.cm of m1] {$no$};
\node (m3) [below right of=c] {$yes$};
\node (m4) [right=0.cm of m3] {$m_{2,2}$};
\node (m5) [below right=0.5cm and 0.cm of d] {$yes$};
\node (m6) [right=0.cm of m5] {$m_{2,3}$};
\node (t1) [below of=m1] {$a^{\omega}$};
\node (t2) [below left = 1.1cm and 0.cm of m4] {$a.b^{\omega}$};
\node (t3) [below of=m6]{$b^{\omega}$};
\path 
	(c) edge (m)
	
    (d) edge (m)
	(m1) edge [dashed] (c)
	(m2) edge [dashed] (d)
	(m3) edge [dashed] (c)
	(m4) edge [dashed] (d)
	(m5) edge [dashed] (c)
	(m6) edge [dashed] (d)
 	(t2) edge [blue,dashed] (m3)
         edge [red,dashed] (m4)
 	(t3) edge [blue,dashed] (m5)
         edge [red,dashed] (m6)       
 	(t1) edge [blue,dashed] (m1)
         edge [red,dashed] (m2);
\end{tikzpicture}
\end{minipage}
\caption{The circuit monitor for the formula from Example \ref{ex:1}, over $T = \{a^{\omega}, b.a.b^{\omega}, b^{\omega} \}$.}
\label{fig:circuit_monitor}
\end{figure}
\begin{figure} 
\footnotesize
\textbf{Monitor semantics:}{
\centering
\begin{gather*}
\SOSrule{s[m_{[i]}]=yes}
{s \rightarrow s[\sfrac{yes}{\bigvee [m]_k }]}
\qquad
\SOSrule{s[m_{[i]}]=no}
{s \rightarrow s[\bigvee [m]_k \backslash i]}
\qquad
\SOSrule{s[m_{[i]}]=end}
{s \rightarrow s[\bigvee [m]_k \backslash end_i]}
\\
\SOSrule{s[\bigvee[m]_k]=0}
{s \rightarrow s[\sfrac{no}{\bigvee [m]_k }]}
\qquad
\SOSrule{s[\bigvee[m]_k]=2^k}
{s \rightarrow s[\sfrac{end}{\bigvee [m]_k }]}
\end{gather*}}
\textbf{Instrumentation:}
{\centering
\begin{gather*}
\SOSrule{m \xrightarrow[\text{}]{\tau} m'}
{m \instr t \xrightarrow[\text{}]{\tau} m' \instr t}
\quad
\SOSrule{m \xrightarrow[\text{}]{a} m'}
{m \instr a.t \xrightarrow[\text{}]{a} m' \instr t}
\quad
\SOSrule{ \forall j \in \{ 1,\ldots r\},~ m_{j[i]} \instr t  \xrightarrow[\text{}]{a} m_{j[i]}'\instr t'}
{ s \instr (\overrightarrow{m} \instr T) \rightarrow s \instr (\overrightarrow{m}[\sfrac{m_{j[i]}'}{m_{j[i]},~ \forall j }] \instr T[\sfrac{t'}{t}])}
\\
\SOSrule{s  \xrightarrow[\text{}]{} s'}
{s \instr( \overrightarrow{m} \instr T)  \rightarrow s' \instr ( \overrightarrow{m}\instr T)}
\qquad
\SOSrule{s \instr (\overrightarrow{m}\instr T) \rightarrow s \instr ( \overrightarrow{m}[\sfrac{v}{n_{j[i]}}] \instr T[\sfrac{t'}{t}])}
{s \instr (\overrightarrow{m} \instr T) \rightarrow s[\sfrac{v}{g_{m_{j[i]}}}]\instr (\overrightarrow{m}[\sfrac{v}{n_{j[i]}}] \instr T[\sfrac{t'}{t}])}
\end{gather*}}
\caption{\label{tab:sos_rules} Operational semantics of processes in \monc.}
\end{figure}
\textit{Semantics} 
The semantics of a regular monitors is as presented in \cite{AcetoPOPL19}. Each regular monitor corresponds to an LTS, and a transition labeled with $a \in \act$ corresponds to a regular monitor observing the event $a$ when instrumented with a system $p$ that produces it.
The semantics of a circuit monitor is given as a transition relation $\xrightarrow[\text{}]{} \subseteq$ $\mathcal{S}_M \times \mathcal{S}_M$ and the instrumentation  $\instr$ takes place over a set of regular monitors $\overrightarrow{m}$ instrumented over a set  of traces $T$, denoted $M(T)$.

  We define $M(T): = s_{M_{|T|-init}} \instr \overrightarrow{m}_{[i]} \instr T$, where $\overrightarrow{m}$ is the set of regular monitors that occur in $M$, and $\overrightarrow{m}_{[i]}$ is $ \overrightarrow{m}$, instrumented over the trace $t_i \in T.$ When $m$ is a regular monitor then $\instr$ stands for the existing instrumentation relation from \cite{AcetoPOPL19}.
The transition and instrumentation relations are defined as the least ones that satisfy the axioms and rules in Figure \ref{tab:sos_rules}. Due to lack of space, we only include the rules giving the semantics of the $\bigvee [m]_k$ monitor. Those for the other operators follow the same structure. The proof in Appendix \ref{appendix:proof_viol_comp} could help with the understanding of the more intricate instrumentation rules. 

A monitor is required to be \textit{correct} with respect to some specification formula $\varphi$. The notions of correctness we use in this work are defined below.

\begin{definition} 
Given a monitor $M \in$ \monc , and a set of traces $T$.
\begin{itemize}
\item  $M$ \textbf{rejects} $T$ (resp. \textbf{accepts} $T$) denoted $rej(M,T)$  (resp. $acc(M,T)$) iff $M(T) \rightarrow^* s \instr \overrightarrow{n}\instr T'$ for some $s,\overrightarrow{n},~ T'$, where $s[M]=no$ (resp. $s[M]=yes$). 
\item Given a formula $\varphi \in$ \hyperhml, $M$ is \textbf{sound} for $\varphi$ if $\forall T$, $acc(M,T) \implies T \models \varphi$, and $rej(M,T) \implies T \not \models \varphi$.
\item $M$ is \textbf{violation complete} for $\varphi$ if $\forall T$, $T\not \models \varphi \implies rej(M,T)$.
\end{itemize}
\end{definition} 
\textit{Synthesis:}
Given a formula $\varphi$ in \ellihml, We synthesize a circuit monitor $M$ through the following recursive function $Syn(-) :$ \ellihml $\rightarrow$ \monc. 
\begin{definition}[Circuit Monitor Synthesis]\label{def:synthesis}
\begin{align*} 
&Syn(\exists_{\pi} \varphi)= \bigvee [m(\varphi)]_{k}  &&~~ Syn(\forall_{\pi} \varphi)= \bigwedge [m(\varphi)]_{k}  \\
&Syn( \varphi_1 \sqcup \varphi_2)= Syn(\varphi_1) \vee Syn(\varphi_2)
&&~~ Syn( \varphi_1 \sqcap \varphi_2)= Syn(\varphi_1) \wedge Syn(\varphi_2)
\end{align*}  
 Where $m(-)$ is the monitor synthesis function for \shml defined in \cite{AcetoPOPL19}.
\end{definition}
\begin{proposition}\label{{prop:viol_comp}} Given a formula $\varphi$ in \ellihml, we have $Syn(\varphi)$ is a sound and violation-complete monitor for $\varphi$. 
\end{proposition}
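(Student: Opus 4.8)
The plan is to prove the statement by structural induction on the \ellihml formula $\varphi$, reducing the correctness of the whole circuit to the correctness of the individual \shml monitors $m(\psi)$ established in \cite{AcetoPOPL19} and to the local behaviour of each gate. The two base cases are the quantified formulae $\exists_{\pi} \psi$ and $\forall_{\pi} \psi$, and the two inductive cases are $\varphi_1 \sqcup \varphi_2$ and $\varphi_1 \sqcap \varphi_2$. For each case I would establish the three required implications separately: $acc(M,T) \implies T \models \varphi$ and $rej(M,T) \implies T \not\models \varphi$ (soundness), together with $T \not\models \varphi \implies rej(M,T)$ (violation completeness).

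Before the induction I would isolate a bridge lemma connecting the isolated semantics of a regular monitor to its behaviour inside the instrumented circuit $M(T)$. Using the instrumentation rules of Figure~\ref{tab:sos_rules}, I would show that every embedded copy $m(\psi)_{[i]}$ advances over its own trace $t_i$ exactly as it does in isolation in \cite{AcetoPOPL19}, that distinct copies and distinct gates do not interfere since each rule only touches its own variables, and that whenever $m(\psi)_{[i]}$ reaches a verdict $v \in \{yes,no,end\}$ the program variable $g_{m(\psi)_{[i]}}$ is eventually updated to $v$. Combined with the soundness and violation-completeness of $m(\psi)$ from \cite{AcetoPOPL19}, this yields the leaf characterisation I will use repeatedly: $g_{m(\psi)_{[i]}}$ is set to $no$ iff $t_i \not\models \psi$, and it is set to $yes$ only if $t_i \models \psi$.

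For the base case $\exists_{\pi} \psi$, with $M = \bigvee [m(\psi)]_k$, the monitor-semantics rules show that the gate resolves to $yes$ as soon as some leaf reports $yes$, to $no$ only once every coordinate has been cleared by a $no$ (the encoded value reaching $0$, which forces the $end$ bit to be $0$ as well), and to $end$ only when the $end$ bit is set while all coordinates are cleared. Soundness then follows from the leaf characterisation: a $yes$ forces some $t_i \models \psi$, hence $T \models \exists_{\pi} \psi$, while a $no$ forces every $t_i \not\models \psi$, hence $T \not\models \exists_{\pi} \psi$. For violation completeness, $T \not\models \exists_{\pi} \psi$ means every $t_i \not\models \psi$, so every leaf eventually reports $no$, the encoded value reaches $0$, and the gate rejects. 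The case $\forall_{\pi} \psi$ with $M = \bigwedge [m(\psi)]_k$ is dual, exchanging the roles of $yes$ and $no$. The inductive cases $\varphi_1 \sqcup \varphi_2$ and $\varphi_1 \sqcap \varphi_2$ are handled by applying the induction hypothesis to $Syn(\varphi_1)$ and $Syn(\varphi_2)$ and invoking the analogous evaluation rules for the binary $\vee$ and $\wedge$ gates, which compute the boolean disjunction and conjunction of the sub-verdicts.

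The main obstacle I anticipate is the bridge lemma together with the need for verdict determinacy. Since $rej(M,T)$ is defined by existential reachability, violation completeness only requires exhibiting one computation reaching $no$, which the leaf characterisation supplies; but soundness requires ruling out \emph{any} schedule that reaches an incorrect verdict. The argument here is that each gate value is a monotone function of the leaf verdicts produced so far, that verdicts are irrevocable once emitted, and that the binary encoding $j$ faithfully records which sub-gates have resolved and whether any returned $end$, so the root verdict is independent of the scheduling of reads and evaluations and is always a sound consequence of the leaf verdicts. Making this precise across the intricate instrumentation rules, in particular tracking the interplay of the configuration updates $s[M\backslash i]$, $s[M\backslash end_i]$ and $s[\sfrac{v}{M}]$, is the delicate bookkeeping on which the whole proof rests.
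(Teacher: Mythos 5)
Your proposal is correct and takes essentially the same route as the paper: structural induction on $\varphi$, with the quantified base cases reduced to the soundness and violation completeness of the regular \shml monitors of \cite{AcetoPOPL19} plus the gate-evaluation rules, and the $\sqcup/\sqcap$ cases handled via a non-interference lemma matching the paper's appendix lemma (whose crux, as you note, is that the instrumentation rules make all regular monitors on a trace advance synchronously while each gate rule touches only its own variables). Worth noting: your explicit treatment of the soundness half --- monotonicity of gate values, irrevocability of verdicts, and independence from scheduling --- actually goes beyond the paper's own proof, which argues only the violation-completeness direction and leaves soundness implicit.
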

\begin{proof}
 The proof is by induction on the structure of $\varphi$. We present here a characteristic case and give more details for some of them in the Appendix \ref{appendix:proof_viol_comp}.
Assume that $ \varphi = \exists_{\pi} \psi$, with $ \psi \in \shml$ and that we have a set of traces $T$ s.t. $T \not \models \varphi$.
  From the semantics of \ellihml, we have that $t_i \not\models \psi$, for all traces $t_i$ in $T$.
   However $\psi \in \shml$ and thus from \cite{AcetoPOPL19} we get that $m_{\psi}$ is a violation complete monitor for $\psi$. 
This means that for all $t_i \in T$, there exist $t_i' \in \act^*$ and $t_i'' \in \act^{\omega}$, such that $t_i = t_i'.t_i''$, such that the monitor $m_{\psi}$ rejects $t_i'$.

 From the rules in Figure \ref{tab:sos_rules} we see that each gate $g_{m_{\psi[i]}}$ will reach the value $no$ as enough events over the trace $t_i'$ will occur. I.e. $s_M \instr \overrightarrow{m_{\psi_{[i]}}} \instr T \rightarrow^* s_M \instr \overrightarrow{m}_{[i]}[\sfrac{no}{m_{[i]}}] \instr T[\sfrac{t_i''}{t_i}]$, witch propagates to the evaluation of $g_{m_{[i]}}$ to $no$, for all $i$. We now study the transitions $s_M[\sfrac{no}{g_{m_{\psi}[i]}}]$ since those can be then composed with this instrumentation via the fourth instrumentation rule.  Applying the SOS rules yields that the update $\backslash i$ takes place for all $i$ at the gate $\bigvee [m]_k$ which means that the value of $j$ stored in it becomes $0$. This finally yields that the value of the final gate $\bigvee [m]_k$ becomes $0$, i.e. $s_M [\sfrac{no}{g_{m_{[i]}}}~ \forall i]  \rightarrow s_M [\sfrac{no}{\bigvee [m]_k}]$. Since this transition can be composed with the discussed instrumentation we have that $s_M \instr \overrightarrow{m_{\psi_{[i]}}} \instr T \rightarrow s_M[\sfrac{no}{g_{\bigvee [m_{\psi}]_{[i]}}}]\instr \overrightarrow{n} \instr T'$ for some $\overrightarrow{n}$ and $T$ and we are done.  \qed
\end{proof}

\subsection{Runtime costs}

The monitor synthesis in Definition \ref{def:synthesis} provides a family of circuits that can be instrumented appropriately on an arbitrary set of traces to analyze the events occurring in them. Ideally, the runtime cost of monitoring resulting from our constructions should be bounded by a constant that does not depend on the parameters of the system (such as the number of available traces, or of the events observed so far) \cite{rabin_real_time_comp}. In this way, if a monitor is launched along with the system components, it will only induce a feasible computational overhead. 

We already know that the regular monitors instrumented with individual traces analyze the system events they observe with a constant overhead \cite{hml_monitors}. Regarding the computational cost of the circuit part, since we are given $k$ many traces, it must be that the necessary computation performed from a circuit monitor can be  performed in parallel, distributed over the components that produced the traces in the first place. This means that we can only concern ourselves with the \textit{circuit complexity} \cite{circuit_complexity} of a given monitor, which encapsulates the parallel processing power necessary for its evaluation.

We now observe the synthesis function. There, a formula $\varphi$ in \ellihml will be turned into a family of circuit monitors where, for each connective of the original formula $\varphi$, the output monitor increases in size based on the size for the monitors of the sub-formulae of $\varphi$. However, for each connective of the formula, the \textit{depth} of the circuit is only increased by $1$ which means that the output circuit monitor has a depth bounded by the size of the formula $\varphi$. Since the gates of the output monitor can have either a fixed amount of sub-gates ($\vee, \wedge$), or $k$ many ($\bigvee, \bigwedge$), we have that the output circuit is in the complexity class AC$^0$ \cite{circuit_complexity}. Thus, the monitor only adds a constant computational overhead when executed over the computational resources of the distributed components of the system.

\section{Conclusion and future work}

We expect that the fragment \ellihml is maximal with respect to violation completeness, which means that any monitor in \monc is monitoring for a formula in \ellihml. 
 However, the ultimate goal of this work is to extend the collection of monitorable properties by allowing alternating quantifiers in the syntax.
  This is a very important aspect of any work in this field, as the more interesting hyperproperties, such as the property ``at all times, if one trace encounters the event $p$ then all traces do so as well'' which is a necessary component for the expression of properties such as noninference \cite{temp_logic_hyperprop,possibilistic_prop}, require alternation of quantifiers. 
  
  A way to tackle this would be to project such properties into the \ellihml fragment. However this procedure is not formally yet defined, or trivial and one could argue that since every hyperproperty has been shown (\cite{hyperproperties}) to be the intersection of a liveness and a safety hyperproperty, (and since liveness and safety properties are widely accepted as independent \cite{alpernBowenSchneiderLiveandSafe}), an elimination of alternating quantifiers can only take place in very few cases.
    Thus, our main purpose is to extend the logic and the consequent monitors in order to express and monitor for the most general class of such properties.
     The main objective of the logical fragment we give here is to establish a formal baseline which we will attempt to extend in future work.

 Our approach to an extension would be to allow a notion of synchronization rounds among the regular monitors (or equivalently a round of communication). This would enable more complex dependencies between traces, as now the properties required of a given trace can be impacted by the state of the ones monitored for on a different one. However, the analysis of communications among the monitors is a complicated extension, as their exact content plays a significant role to our insight over the system, as well as the processing at runtime cost. We plan to implement this therefore by utilizing dynamic epistemic logic \cite{dyn_epist_logic} in order to perform this extension formally and soundly. 


\bibliographystyle{plain}
\bibliography{thebibliography.bib}
\newpage
\appendix
\section{Appendix: cases for the proof of violation completeness}\label{appendix:proof_viol_comp}
Here we give some more insight on the remaining cases of the violation completeness proof. First we highlight that the second base case of our proof, for formulae of the form $\forall_{\pi} \psi$ is completely analogous to the one we give and thus omitted.

 We will here give an important lemma necessary for analyzing both remaining cases, and then present the high level details for the case of $\sqcap$. The intuition of the importance of the lemma is that the monitors $Syn(\varphi_1)$ and $Syn(\varphi_2)$ should not have their computation affected from the fact that they are run in parallel over a set of traces $T$. 
 
\begin{lemma}
If 
\begin{itemize}
\item $s_{M_1} \instr \overrightarrow{m_1}[i] \instr T\rightarrow s_{M_1}'\instr \overrightarrow{m_1}[i]' \instr T'$, and 
\item $s_{M_2} \instr \overrightarrow{m_2}[i] \instr T\rightarrow s_{M_2}'\instr \overrightarrow{m_2}[i]' \instr T'$
\end{itemize} 
then 
\begin{itemize}
\item $s_{M_1 \vee M_2}  \instr \overrightarrow{m_{12}}[i] \instr T\rightarrow s_{M_1 \wedge M_2}' \instr \overrightarrow{m_{12}}[i]' \instr T'$, and 
\item $s_{M_1 \wedge M_2}  \instr \overrightarrow{m_{12}}[i] \instr T\rightarrow s_{M_1 \wedge M_2}' \instr \overrightarrow{m_{12}}[i]' \instr T'$,  
\end{itemize}
where $\overrightarrow{m_{12}} = \overrightarrow{m_{2}} \cup \overrightarrow{m_{2}}$ and $\overrightarrow{m_{12}}' = \overrightarrow{m_{2}}' \cup \overrightarrow{m_{2}}'$ respectively.
\end{lemma}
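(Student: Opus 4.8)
The plan is to prove the lemma by a case analysis on the SOS rule of Figure~\ref{tab:sos_rules} used to derive each hypothesised single-step transition, exploiting the fact that $Syn$ assigns program variables gate-by-gate. Concretely, since a fresh variable is introduced for every gate, the variable sets satisfy $G_{M_1}\cap G_{M_2}=\emptyset$ and $G_{M_1\vee M_2}=\{g_{M_1\vee M_2}\}\uplus G_{M_1}\uplus G_{M_2}$, and likewise for $M_1\wedge M_2$. Thus the configuration $s_{M_1\vee M_2}$ is exactly the disjoint extension of $s_{M_1}$ and $s_{M_2}$ together with the freshly initialised root variable, and the target configuration is built analogously from $s_{M_1}'$ and $s_{M_2}'$ (this is what the conclusion's $s_{M_1\vee M_2}'$ and $s_{M_1\wedge M_2}'$ denote).

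First I would record a \emph{locality (frame) property} of the relation: every axiom and rule of Figure~\ref{tab:sos_rules} inspects only the variable of the gate being fired together with those of its immediate sub-gates (or, for the instrumentation rules, the regular monitors and trace attached to them), and its conclusion modifies only those same variables via an update $s[\,\cdot\,]$, leaving every other program variable and every other trace untouched. This is immediate from the shape of each rule. The frame property is precisely what lets a derivation carried out in the small context $G_{M_1}$ be replayed verbatim inside $G_{M_1\vee M_2}$: the premises read the same values (shared variables agree by construction), the side conditions still hold, and the induced update leaves all $M_2$-variables and the root variable at their previous values; symmetrically for $M_2$.

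Then I would split on the two families of steps. For an internal configuration step (the monitor-semantics rules and the fourth instrumentation rule, where $T=T'$), the step of $M_1$ lifts directly by the frame property, keeping $G_{M_2}$ and the root fixed; these steps never interact. The delicate case is the instrumentation \emph{read} step (the third and fifth rules), where an event $a$ is consumed on some trace $t_i$: this rule advances \emph{all} regular monitors attached to $t_i$ at once and rewrites $t_i$ from $t$ to $t'$. Because both hypotheses exhibit the \emph{same} trace update $T\to T'$, they must consume the same event on the same trace; the first hypothesis supplies the $a$-transitions of the $\overrightarrow{m_1}$-monitors on $t_i$ and the second those of the $\overrightarrow{m_2}$-monitors, so jointly they discharge exactly the ``$\forall j$'' premise of the third rule instantiated at the union $\overrightarrow{m_{12}}=\overrightarrow{m_1}\cup\overrightarrow{m_2}$, yielding a single combined step to the required disjoint extension.

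I expect this shared-trace read step to be the main obstacle, since it is the only place where the two monitors genuinely meet: the combined regular-monitor set forces the premise of the third rule to quantify over the \emph{union} of the two families, so both hypotheses are needed simultaneously, and one must check that the two trace rewrites coincide (guaranteed by the common $T\to T'$) and that no spurious verdict is propagated to the fresh root gate in a single step — its sub-gates have not yet produced verdicts, so none of the root's firing rules are enabled and $g_{M_1\vee M_2}$ (resp.\ $g_{M_1\wedge M_2}$) stays at its initial value. Finally, the multi-step form actually used in the main proposition follows by a routine induction on derivation length, interleaving the lifted single steps of $M_1$ and $M_2$.
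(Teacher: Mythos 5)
Your proposal is correct and takes essentially the same route as the paper's proof: both rest on the observation that a configuration of $M_1 \vee M_2$ (resp.\ $M_1 \wedge M_2$) is the disjoint union of the configurations of $M_1$ and $M_2$ plus the root variable, and that the third (synchronous-read) instrumentation rule is the crux, its ``$\forall j$'' premise over the combined family $\overrightarrow{m_{12}}$ being discharged jointly by the two hypothesised transitions since both consume the same event on the same trace. Your explicit frame property and the separate treatment of internal configuration steps merely spell out details the paper's shorter argument leaves implicit.
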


\begin{proof}
We note here that a configuration for $s_{M_1 \vee M_2}$ is identical to one for $s_{M_1 \wedge M_2}$ except the root variable, as all other variables they both contain are $s_{M_1}' \cup s_{M_2}'$. 

The key aspect of this proof is the third rule of the instrumentation relation. There we can see that in order for a configuration instrumented over a set of regular monitors, instrumented over a set of traces, can only advance its computation, if all monitors instrumented over the same trace progress with their computation synchronously by reading the next trace event. 

Thus, form the assumptions of this lemma we get that for all $j = \{1,\ldots r\}$, where $r$ is the total amount of different regular monitors occurring in $M_1$ and $M_2$ the premise of our rule is satisfied and thus the cumulative configuration of variables amounting for the union of variables of the two circuit monitors $M_1$ and $M_2$ (including the root variable), can perform the necessary transition to the new state, where all regular monitors (those both from $M_1$ and $M_2$) assigned to trace $t_i$ have processed the event $a$, and we are done. \qed
\end{proof}

Having the above lemma streamlines our inductive step for the rest of the cases. 
Assuming a non-base-case formula in \ellihml we can clearly see that it must be of the form $\varphi = \varphi_1 \sqcap \varphi_2$   or $\varphi = \varphi_1 \sqcap \varphi_2$. We only analyze one of the two cases as they are symmetrical. 
For any set of traces $T$, such that $T \not\models \varphi$, from the semantics of \ellihml, we have that $T \not\models \varphi_1$ and $T \not\models \varphi_2$. Since the synthesized monitor for $\varphi_1 \sqcap \varphi_2$ can reach a configuration where the values of the gates for $Syn(\varphi_1)$ and $Syn(\varphi_2)$ are the same as they would be for the individual monitors instrumented over $T$, and by inductive hypothesis (which guarantees that $Syn(\varphi_1)$ and $Syn(\varphi_2)$ are violation-complete) we have necessary conclusion by combining the two negative verdicts of the individual monitors via the semantics. \qed

\end{document}